\newtheorem{theorem}{Theorem}
\newtheorem{lemma}[theorem]{Lemma}
\newtheorem{corollary}[theorem]{Corollary}
\newtheorem{definition}[theorem]{Definition}
\theoremstyle{definition}
\newtheorem{rem}[theorem]{Remark}
\numberwithin{equation}{section}  
\numberwithin{theorem}{section}  
\DeclareMathOperator{\sgn}{sgn}
\DeclarePairedDelimiter{\norm}{||}{||_2}
\newcommand{\ep}{\varepsilon}
\newcommand{\cP}{\mathcal{P}}
\title{Strong quantum state transfer on graphs via loop edges}
\author{Gabor Lippner, Yujia Shi}
\begin{document}

\maketitle

\begin{abstract}
We quantify the effect of weighted loops at the source and target nodes of a graph on the strength of quantum state transfer between these vertices. We give lower bounds on loop weights that guarantee strong transfer fidelity that works for any graph where this protocol is feasible. By considering local spectral symmetry, we show that the required weight size depends only on the maximum degree of the graph and, in some less favorable cases, the distance between vertices. Additionally, we explore the duration for which transfer strength remains above a specified threshold.
\end{abstract}


\section{Introduction}

We study quantum walk on a simple connected graph. Our goal is to understand the effect of adding weighted loops on the source and target nodes on the transfer fidelity. Let $A$ denote the adjacency matrix of the network, and let $H = A + Q\cdot D_{u,v}$ where $D_{u,v}$ is the diagonal matrix with 1 at the $u$ and $v$ entries and 0 elsewhere. 

The continuous-time quantum walk is given by
\begin{equation}\label{eq:qw} U(t) = e^{itH},\end{equation} and, in particular, the $u\to v$ transfer strength at time $t$ is defined as 
\[ p_{u,v}(t) = p(t) = |U(t)_{u,v}|^2.\]
We are interested in $F(Q) = F(Q,u,v) = \sup_t p(t)$, the $u\to v$ \emph{transfer fidelity} of the network, as a function of $Q$.

The relevance of transfer strength as defined above comes from its connection to spin systems. Locally coupled networks of spin particles have been used extensively as a protocol to transfer quantum information over physical distances. See~\cite{Bose_2003} and~\cite{Christandl_2005}.
It is well-known~\cite{kay2010} that the time evolution (in the so-called 1-excitation subspace, which is most relevant for information transfer) of such a spin network is exactly described by the continuous-time quantum walk \eqref{eq:qw}.

Weighted loops on the graph, in this context, correspond to applying transverse magnetic fields to certain particles in the network~\cite{Shi_2005}.
Transfer strength in this protocol has also been numerically investigated in~\cite{casaccino2009quantum,Christandl_2005, xx_magnetic_2012, Chen_2016} 

In terms of rigorous results, it has been shown in~\cite{Tunneling_2012} that $\lim_{Q \to \infty} F(Q) = 1$ if and only if the network has a certain spectral symmetry around $u$ and $v$ (for more details see Section~\ref{se:decomposition}), but there were no quantitative bounds were given. Then, in~\cite{path} Kirkland and van Bommel gave very precise behavior of $F(Q)$ in the case of path graphs, $u,v$ being the endpoints of the path. It turned out that in order to achieve $F(Q) > 1-\ep$, one only needs $Q > \ep^{-1/2}$. What was striking, in particular, is that the threshold did not depend on the length of the path. Their method relies on a somewhat explicit computation of the eigenvalues and eigenvectors of the Hamiltonian $H$, and thus it doesn't immediately lend itself to analyze more complicated networks.

In~\cite{quantifying}, we devised a method to generalize the bounds for the path to arbitrary graphs that have an involution that maps $u$ to $v$. We showed that in order to achieve $F(Q) > 1-\ep$ on such graphs, it is sufficient to choose $Q > c(m,\ep)$ where $m$ is the maximum degree of the network. This method relied on the introduction of approximate eigenvectors instead of explicit formulas. While it resulted in a worse $\ep^{-2}$ - dependence on $\ep$, it still had the important feature of not depending on the size of the network, only its maximum degree.

In the current paper, we present a generalization of the quantitative bound to all graphs where $\lim_{Q\to \infty} F(Q) = 1$. For other graphs, the question in this form is meaningless as the fidelity converges to a number strictly less than 1. These new bounds depend on the maximum degree, and in certain cases also on the distance of $u$ and $v$. However, the improved methods yield a dependence of $1/\sqrt{\ep}$ in almost all cases. In particular, we recover the asymptotics of the Kirkland - van Bommel paper whenever $u,v$ are cospectral in the classical sense, including any graph with an involution.

\begin{theorem}\label{thm:main} Let $G(V,E)$ be a finite graph with maximum degree $m$, and $u,v \in V$ fixed. Assume $c \geq d$ where $c = co(u,v)$ denotes their \emph{cospectrality} (see Definition~\ref{def:cos}) and $d = d(u,v)$ denotes the distance of the two nodes. For any $\ep >0$ we have 
\[ Q > 16 \frac{1}{\ep^{1/\min(2,c-d+1)}} m^{1+\max(\frac{1}{2}, \frac{d}{c-d+1})}  \Longrightarrow F(Q) > 1- \ep. \]
\end{theorem}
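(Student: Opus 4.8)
The plan is to build on the approximate-eigenvector machinery from \cite{quantifying}, but adapt it to the more refined spectral situation controlled by the cospectrality parameter $c = co(u,v)$. The first step is to recall the spectral decomposition of $H = A + Q D_{u,v}$ relative to the symmetric/antisymmetric splitting at $u,v$: writing $e_u^{\pm} = \frac{1}{\sqrt 2}(e_u \pm e_v)$, one checks that $H$ preserves the decomposition into the subspace where the loop acts and, crucially, that perfect transfer in the $Q\to\infty$ limit is governed by how well $e_u^{+}$ and $e_u^{-}$ can be matched by genuine eigenvectors of $H$ with eigenvalues that are close to each other. The hypothesis $c \geq d$ is exactly the condition (from \cite{Tunneling_2012}) that $\lim_{Q\to\infty} F(Q) = 1$, so the decomposition is nondegenerate in the relevant sense. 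I would set up, for large $Q$, a pair of normalized vectors $\psi^{\pm}$ supported near $u$ (respectively $v$) that are approximate eigenvectors of $H$: $\|H\psi^{\pm} - \mu^{\pm}\psi^{\pm}\| \leq \eta(Q)$ for explicit $\mu^{\pm}$ and an error $\eta(Q)$ that I will need to drive down.

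**Next I would** quantify the two competing effects. On one hand, the standard estimate says that if $\psi^{\pm}$ are $\eta$-approximate eigenvectors with well-separated nominal eigenvalues, then $e_u \approx \frac{1}{\sqrt2}(\psi^+ + \psi^-)$ and $e_v \approx \frac{1}{\sqrt 2}(\psi^+ - \psi^-)$ up to an error controlled by $\eta$ divided by the relevant spectral gap, and then $p(t)$ at the beating time $t \approx \pi/(\mu^+ - \mu^-)$ is within $O(\eta^2/\text{gap}^2)$ plus $O(\eta\cdot t)$ of $1$. So I need (i) the approximation error $\eta(Q)$ small and (ii) the splitting $|\mu^+ - \mu^-|$ not too small, since it enters both through the gap and through the transfer time $t$. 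The construction of $\psi^{\pm}$ is where the parameters $m$, $c$, $d$ enter: the vector $\psi^+$ should be $e_u^{+}$ corrected by a tail that lives on the first few neighborhoods of $u$ and $v$, and the key point is that because $u,v$ are $c$-cospectral, the "walk vectors" $A^j e_u^{+}$ and $A^j e_u^{-}$ agree for $j < c$, so the mismatch only shows up at distance-$(c-d+1)$-type scales, giving the exponent $1/\min(2, c-d+1)$ on $\ep$. Bounding the entries of these corrected vectors by powers of $m$ (using that a vertex has at most $m$ neighbors, so $\|A^j e_u\|$-type quantities grow like $m^{j/2}$ in the relevant directions) produces the $m^{1+\max(1/2, d/(c-d+1))}$ factor.

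**The main obstacle** I expect is controlling the interaction between the error term $\eta(Q)$ and the smallness of the eigenvalue splitting $|\mu^+ - \mu^-|$ simultaneously, since making the approximate eigenvectors better localized (to reduce $\eta$) tends to shrink the splitting, and the transfer time $t \sim 1/|\mu^+-\mu^-|$ amplifies $\eta$. Getting the optimal $\ep^{-1/2}$ (rather than $\ep^{-2}$ as in \cite{quantifying}) will require a careful second-order analysis: rather than treating $\psi^{\pm}$ as crude bump functions, I would solve the relevant finite linear system on the first $c$ neighborhoods exactly to kill the leading error term, so that $\eta(Q)$ decays like $Q^{-1}$ with the right constant, and then balance $\eta(Q) \cdot t$ against $\ep$. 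The cospectrality hypothesis is what makes this finite linear system solvable with the claimed degree bounds; without $c \geq d$ the system is inconsistent, reflecting $\lim F(Q) < 1$. Finally I would assemble the bounds: choosing $Q$ as in the statement makes both $\eta^2/\text{gap}^2$ and $\eta t$ at most $\ep/2$, hence $p(t) > 1 - \ep$ at the beating time, so $F(Q) = \sup_t p(t) > 1 - \ep$.
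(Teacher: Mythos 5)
Your plan diverges from the paper's method and, as written, has a genuine gap. The paper does not use approximate eigenvectors at all: it localizes the two exact top eigenvalues $\lambda_1,\lambda_2$ of $H$ in $[Q-m,Q+m]$ by Gershgorin, controls the exact eigenvectors through the walk-generating functions $Z_{xy}(\lambda)$ (Theorem~\ref{thm:z}), proves concentration of $\varphi_1,\varphi_2$ on $\{u,v\}$ (Corollary~\ref{cor:phi_bound}), gets the balance $|\varphi_i(u)|\approx|\varphi_i(v)|$ from cancellation of the first $c$ terms of $Z_{uu}-Z_{vv}$ (Lemma~\ref{lemma:ratio}), fixes the sign pattern by Perron--Frobenius plus orthogonality, and reads out at the exact beating time $t_0=\pi/(\lambda_1-\lambda_2)$. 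Because only exact eigenvalues enter the phases, no error proportional to the readout time ever appears.

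Two concrete problems with your route. First, your error budget contains a term $O(\eta\cdot t)$ with $t\approx \pi/|\mu^+-\mu^-|$; the internal splitting of the doublet is of order $Q^{-(d-1)}$ (cf.\ Theorem~\ref{thm:t0bound}), so $t$ is exponentially large in $d$. With $\eta(Q)\sim Q^{-1}$ as you propose, $\eta t\sim Q^{d-2}$, which cannot be balanced against $\ep$ once $d\geq 3$ unless $\eta$ is made exponentially small in $d$; that would force $Q$ to grow with $d$, contradicting the claimed bound, which is $d$-independent when $u,v$ are cospectral. The cure is to phase against the exact gap $\lambda_1-\lambda_2$, i.e.\ to prove structure of the exact doublet rather than propagate approximate eigenvectors in time. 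Second, and more fundamentally, the heart of the theorem is resolving the nearly degenerate doublet \emph{without} an exact involution: one must show each exact top eigenvector is balanced, $|\varphi_i(u)|\approx|\varphi_i(v)|$, with quantitative error of size $m^{c+1}/(Q-2m)^{c-d+1}$, using only $c$-cospectrality. Your remark that the walk vectors agree up to length $c$ so the mismatch ``only shows up at distance-$(c-d+1)$-type scales'' is the right intuition but not an argument; standard approximate-eigenvector perturbation theory identifies individual eigenvectors inside the doublet only up to errors measured against the exponentially small internal splitting, so it cannot deliver this. The paper's Lemma~\ref{lemma:ratio} --- the identity $(\mu^2-\nu^2)Z_{uv}(\lambda)=\mu\nu\,(Z_{uu}(\lambda)-Z_{vv}(\lambda))$ for exact eigenvectors, which bounds $|\mu/\nu-\nu/\mu|$ by $\lambda^{d}$ times the tail of $Z_{uu}-Z_{vv}$ starting at order $c+1$ --- is precisely the missing ingredient, and without it (or an equivalent quantitative balancedness estimate) your outline does not close.
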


\begin{rem}
In the above theorem, the dependence on the prescribed error is almost always $1/\sqrt{\ep}$, except for the $c=d$ case when it is $1/\ep$. The exponent of $m$ can be as large as $d+1$ when $c=d$ but goes down gradually as $c$ increases relative to $d$, reaching the optimal $3/2$ when $c \geq 3d-1$. 
In particular, when $u,v$ are cospectral in the usual sense (i.e., $co(u,v) = \infty$), then we get $F(Q) > 1-\ep$ as long as $Q = \Omega(\ep^{-1/2} m^{3/2})$.
\end{rem}

We also generalize our previous bounds~\cite{quantifying} on the readout time $t_0$ at which large transfer strength is guaranteed. While $t_0$ is exponential in the distance of the source and target nodes (as was the case in~\cite{path, quantifying} as well), we prove that within a large interval around $t_0$ the transfer strength remains close to 1. 

\begin{theorem}\label{thm:readout}
If $Q$ satisfies the lower bound of Theorem~\ref{thm:main} then there is a time $t_0 < 2\pi (Q+m)^{d-1}$ such that $p_{u,v}(t_0) \geq 1-\ep$. Furthermore, for any $\delta > 0$ if $t_0 (1-\delta/\pi) \leq t \leq t_0 (1+\delta/\pi)$ we have $p_{u,v}(t) \geq 1- \ep-2\delta$.  
\end{theorem}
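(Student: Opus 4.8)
The plan is to read the readout time off the two-dimensional reduction of the dynamics that already underlies the proof of Theorem~\ref{thm:main}, and then to control $p_{u,v}$ in a window around it by a Lipschitz estimate in $t$ carried out \emph{after} the fast global phase has been factored out. I would start from the spectral picture valid once $Q$ exceeds the threshold of Theorem~\ref{thm:main}: by Weyl's inequalities $H$ has exactly two eigenvalues $\lambda_+>\lambda_-$ in the band $[Q-m,\,Q+m]$ and all other eigenvalues in $[-m,m]$, so $\lambda_\pm$ are separated from the remaining spectrum by a gap of order $Q$; put $\cE=\mathrm{span}(\psi_+,\psi_-)$ for the corresponding real unit eigenvectors. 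Write $e_u=a_+\psi_++a_-\psi_-+r_u$ and $e_v=b_+\psi_++b_-\psi_-+r_v$ with $r_u,r_v\perp\cE$. The estimates behind Theorem~\ref{thm:main} supply $\|r_u\|,\|r_v\|=O(\sqrt m/Q)$ --- so that $\|r_u\|\,\|r_v\|$ is a controlled fraction of $\ep$ --- together with $a_\pm,b_+$ within $O(\sqrt\ep)$ of $1/\sqrt2$, $b_-$ within $O(\sqrt\ep)$ of $-1/\sqrt2$, and the transfer bound $p_{u,v}(\pi/\Delta)\ge1-\ep$, where $\Delta:=\lambda_+-\lambda_->0$. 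I take $t_0:=\pi/\Delta$.

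For the time bound, $t_0<2\pi(Q+m)^{d-1}$ is equivalent to $\Delta>\tfrac12(Q+m)^{-(d-1)}$. The mechanism is that, up to lower-order corrections, $\Delta$ equals twice the effective $u$--$v$ coupling obtained by the Schur-complement reduction of $H$ onto the coordinates $u,v$, and this coupling equals $\sum_{\ell\ge1}z^{-(\ell-1)}W_\ell$ evaluated at $z\approx Q$, where $W_\ell$ is the number of walks from $u$ to $v$ of length $\ell$ whose interior avoids $\{u,v\}$. Now $W_\ell=0$ for $\ell<d$; the walks counted by $W_d$ are precisely the geodesics from $u$ to $v$ --- of which at least one exists, since $G$ is connected --- and each contributes $z^{-(d-1)}$ with positive sign, so there is no cancellation among them; while $\sum_{\ell\ge d+1}z^{-(\ell-1)}W_\ell$ forms a geometrically decaying tail which, once $Q$ is as large as Theorem~\ref{thm:main} requires, stays below the geodesic term $z^{-(d-1)}W_d$. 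Since $(Q+m)^{d-1}\ge Q^{d-1}$ this gives $\Delta>\tfrac12(Q+m)^{-(d-1)}$, hence $t_0=\pi/\Delta<2\pi(Q+m)^{d-1}$.

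For the window I would use that $\cE$ is $U(t)$-invariant to write, with $\alpha:=a_+b_+$ and $\beta:=a_-b_-$,
\[ \langle e_u,U(t)e_v\rangle \;=\; \alpha\,e^{it\lambda_+}+\beta\,e^{it\lambda_-}+\langle r_u,U(t)r_v\rangle, \qquad |\langle r_u,U(t)r_v\rangle|\le\|r_u\|\,\|r_v\|, \]
uniformly in $t$, with $\alpha$ within $O(\sqrt\ep)$ of $\tfrac12$ and $\beta$ within $O(\sqrt\ep)$ of $-\tfrac12$; in particular $\alpha\beta<0$. Factoring out $e^{it\lambda_-}$ and squaring, $p_{u,v}(t)$ agrees with $\alpha^2+\beta^2+2\alpha\beta\cos(t\Delta)$ up to an error that is $O(\ep)$ uniformly in $t$. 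Since $\alpha\beta<0$, this expression attains its maximum exactly at $t_0$, where $\cos(t_0\Delta)=-1$; this confirms both that $t_0$ is essentially the optimal readout time and that $p_{u,v}(t_0)\ge1-\ep$. If $t_0(1-\delta/\pi)\le t\le t_0(1+\delta/\pi)$ then $|t\Delta-\pi|\le\delta$, so, since $\cos$ is $1$-Lipschitz and $2|\alpha\beta|\le1$,
\[ \bigl(\alpha^2+\beta^2+2\alpha\beta\cos(t_0\Delta)\bigr)-\bigl(\alpha^2+\beta^2+2\alpha\beta\cos(t\Delta)\bigr) \;=\; 2|\alpha\beta|\,\bigl|\cos(t\Delta)-\cos\pi\bigr| \;\le\; \delta. \]
Combining this $\le\delta$ change with $p_{u,v}(t_0)\ge1-\ep$ and the uniform $O(\ep)$ error --- tracked exactly as in the proof of Theorem~\ref{thm:main} --- gives $p_{u,v}(t)\ge1-\ep-2\delta$ throughout the interval, the slack constant $2$ leaving room for the error bookkeeping.

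The hard part will be the tail estimate used in the second paragraph: showing that $\sum_{\ell>d}z^{-(\ell-1)}W_\ell$ does not overwhelm the positive leading term $z^{-(d-1)}W_d$, uniformly down to the threshold value of $Q$ and across all the regimes of Theorem~\ref{thm:main}. This rests on the positivity of every $W_\ell$, which rules out cancellation, and on the geometric decay of that tail relative to the leading term --- precisely the computation already driving the proof of Theorem~\ref{thm:main}. For that reason the lower bound on $\Delta$ is cleanest to carry out alongside that proof rather than as a separate argument afterward.
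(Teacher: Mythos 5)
Your treatment of the readout window is essentially the paper's own argument: restrict attention to the span of $\varphi_1,\varphi_2$, bound the contribution of the remaining spectrum by Cauchy--Schwarz (this is what Corollary~\ref{cor:phi_bound} plus the $\mu_i^2,\nu_i^2\approx\tfrac12$ estimates from the proof of Theorem~\ref{main} give; your ``$\|r_u\|,\|r_v\|=O(\sqrt m/Q)$'' is not literally right in the regimes where $m^{c+1}/(Q-2m)^{c-d+1}$ dominates, but the operative fact $\|r_u\|\|r_v\|\le 2\ep'\le\ep/4$ does hold), and then lose only $O(\delta)$ from the phase deviation since $|t(\lambda_1-\lambda_2)-\pi|\le\delta$. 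The paper uses $1-\cos\delta\le\delta^2/2$ where you use the $1$-Lipschitz bound on $\cos$; that difference is immaterial, and the constant bookkeeping ($\ep'\le\ep/8$, then $(1-4\ep'-\delta)^2\ge 1-\ep-2\delta$) closes exactly as you anticipate.

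The gap is in the eigenvalue-gap bound $\lambda_1-\lambda_2>\tfrac12(Q+m)^{-(d-1)}$, which you reduce to the unproven assertion that ``up to lower-order corrections, $\Delta$ equals twice the effective $u$--$v$ coupling'' of a Schur complement evaluated at a single $z\approx Q$. Moreover you misidentify where the work lies: the tail $\sum_{\ell>d}z^{-(\ell-1)}W_\ell$ is harmless for a \emph{lower} bound, since every term is positive and can simply be discarded. The genuine difficulties are (i) the self-consistency: $\lambda_1$ and $\lambda_2$ satisfy the $2\times2$ relation \eqref{mtx:Z} at two \emph{different} arguments, so one cannot read the gap off a fixed $2\times2$ matrix, and the diagonal contributions $\lambda_1 Z_{uu}(\lambda_1)$ and $\lambda_2 Z_{uu}(\lambda_2)$ do not cancel; and (ii) the sign structure: one must know $\nu_1/\mu_1>0>\nu_2/\mu_2$ (Perron--Frobenius plus orthogonality, from the proof of Theorem~\ref{main}) so that the two coupling terms add rather than cancel --- that, not the positivity of the $W_\ell$, is the real cancellation risk. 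The paper's Theorem~\ref{thm:t0bound} handles exactly these points: subtract the two exact scalar equations $(\lambda_i-Q)\mu_i=\lambda_i Z_{uu}(\lambda_i)\mu_i+\lambda_i Z_{uv}(\lambda_i)\nu_i$, lower-bound the right-hand side by $\lambda_1^{-(d-1)}$ using $|\nu_i/\mu_i|\ge 1-\ep$ and the opposite signs, and rewrite $\lambda_1 Z_{uu}(\lambda_1)-\lambda_2 Z_{uu}(\lambda_2)$ as $(\lambda_1-\lambda_2)$ times a series bounded by $1+m^2/(Q-2m)^2$, yielding $\lambda_1-\lambda_2\ge\frac{m^2}{m^2+1}\lambda_1^{-(d-1)}$ and hence $t_0\le 2\pi(Q+m)^{d-1}$. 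Without some version of this two-point comparison your second paragraph does not deliver the claimed bound on $t_0$.
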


The outline of this paper is as follows. In Section~\ref{se:decomposition} we outline the main strategy of the proof. In Section~\ref{sec:est}, we establish the key estimates on the components of the eigenvectors. Section~\ref{se:mainthm} concludes the proof of the main result using the estimates from previous sections. Finally, we address the bounds on the readout time and readout interval in Section~\ref{sec:readout}.

\subsection{Proof strategy}\label{se:decomposition}

The Hamiltonian $H$ is a real symmetric matrix so it has real eigenvalues $\lambda_1\geq\lambda_2\geq \dots \geq \lambda_n$. Let us denote the corresponding orthonormal eigenvectors by $\varphi_1, \varphi_2, \dots, \varphi_n$. Then we can compute the probability by 
\begin{align}\label{eq:p(t)}
    p(t) = |U(t)_{u,v}|^2 = \left|\sum_{j=1}^n \varphi_j(u)\varphi_j(v)e^{it\lambda_j}\right|^2.
\end{align}
If $Q$ is large enough, the Hamiltonian $H$ has two diagonal entries that are significantly larger than the rest. This results in the two largest eigenvalues $\lambda_1$ and $ \lambda_2$ approximately equal to $Q$, whereas, in comparison, the rest eigenvalues are small in absolute value. Furthermore, the corresponding eigenvectors $\varphi_1, \varphi_2$ will be concentrated on the $\{u,v\}$ subspace, and by orthogonality of the eigenvectors we should expect $\sgn(\varphi_1(u)\varphi_1(v)) = -\sgn(\varphi_2(u) \varphi_2(v))$.  Thus, it is natural to consider the process at time $t_0=\frac{\pi}{\lambda_1-\lambda_2}$. With this choice, we can estimate
\begin{multline}\label{ineq:p(t)}
|U(t_0)_{u,v}| \geq |\varphi_1(u)\varphi_1(v)-\varphi_2(u)\varphi_2(v)|-\sum_{j = 3}^n |\varphi_j(u)\varphi_j(v)|\\
 \geq |\varphi_1(u)\varphi_1(v)-\varphi_2(u)\varphi_2(v)|-\sqrt{(1-\varphi_1(u)^2-\varphi_2(u)^2)(1-\varphi_1(v)^2-\varphi_2(v)^2)}
\end{multline}

Showing that this last expression can be arbitrarily close to 1 by selecting an appropriate potential $Q$ is going to be our main goal. 


\section{Estimates}\label{sec:est}

In this section, we establish the main estimates and introduce the key notions to bound the values appearing in \eqref{ineq:p(t)}.

\subsection{Eigenvector extension}\label{sec:extension}

Our main tool for studying $\varphi_1$ and $\varphi_2$ for large values of $Q$ is adapted from~\cite{Tunneling_2012}.
\begin{definition} Given a graph $G(V,E)$ and $u,v \in V$, let $\cP_{xy}$ denote the set of walks in $G$ that start at $x$ and end at $y$ and avoid $u$ and $v$, except for possibly the first and last vertex. In the case of $x=y$, we do not include the 0-length walk. Note that this set is typically infinite. For a path $P \in \cP_{xy}$, let $|P|$ denote its length, i.e., its number of edges.
Finally, define a function $Z_{xy}(\lambda)$ via the formula
\[Z_{xy}(\lambda) = \sum_{P \in \cP_{xy}} \lambda^{-|P|}.\]

We will often use $n_k(xy) = |\{ P \in \cP_{xy} : |P| = k\}|$ to denote the number of allowed walks of length exactly $k$ from $x$ to $y$. Then we can rewrite $Z$ as a power series in $\lambda^{-1}$ as 

\[ Z_{xy}(\lambda) = \sum_{k=0}^{\infty} \frac{n_k(xy)}{\lambda^k} =  \sum_{k = d(x,y)}^\infty  \frac{n_k(xy)}{ \lambda^k} .\]
We omit the dependence on $u$ and $v$ from the notation in both $\cP$ and $Z$, but this will not lead to confusion as $u,v$ will be fixed throughout the paper.
\end{definition}

\begin{rem}\label{rem:z_bound} If the maximum degree of $G$ is $m$, then clearly $n_k(xy) \leq m^k$. Hence $Z_{xy}(\lambda)$ is an absolutely convergent power series for $\lambda > m$. In particular,
\[ Z_{xy}(\lambda)  \leq \sum_{k = d(x,y)}^\infty \frac{m^k}{\lambda^k} \leq \left(\frac{m}{\lambda}\right)^{d(x,y)} \frac{1}{1-\frac{m}{\lambda}}.\]
\end{rem}

Now we have the following theorem, whose proof is based on~\cite{Tunneling_2012}.
\begin{theorem}[\cite{quantifying}]\label{thm:z}
Let $G$ be a connected graph $u,v \in V(G)$, and $H$ as above. Let $m$ denote the maximum degree of $G$ and let $\lambda > m$. Let further $\mu, \nu$ be given real numbers. Then $H$ has an eigenvector $\varphi$ with eigenvalue $\lambda$ and $\varphi(u)=\mu, \varphi(v)=\nu$ if and only if
\begin{align}\label{mtx:Z}
    \begin{bmatrix}
        Z_{uu}(\lambda) & Z_{uv}(\lambda)\\
        Z_{uv}(\lambda) & Z_{vv}(\lambda)
    \end{bmatrix}
    \begin{bmatrix}
        \mu\\
        \nu
    \end{bmatrix}=\left(1-\frac{Q}{\lambda}\right)
    \begin{bmatrix}
        \mu\\
        \nu
    \end{bmatrix}
\end{align}
When $\varphi$ exists it is unique and satisfies 
\begin{equation}\label{eq:phi_exp} 
\forall x\neq u,v : \varphi(x) = \mu Z_{xu}(\lambda) + \nu Z_{xv}(\lambda).
\end{equation}
\end{theorem}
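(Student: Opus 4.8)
The plan is to prove Theorem~\ref{thm:z} in two directions, following the spirit of the eigenvector-extension argument of~\cite{Tunneling_2012}. First I would establish the ``only if'' direction together with the explicit formula~\eqref{eq:phi_exp}. Suppose $\varphi$ is an eigenvector of $H$ with eigenvalue $\lambda$, normalized so that $\varphi(u)=\mu$ and $\varphi(v)=\nu$. For a vertex $x\neq u,v$ the eigenvalue equation reads $\lambda\varphi(x)=(H\varphi)(x)=\sum_{y\sim x}\varphi(y)$, since the potential $Q\cdot D_{u,v}$ contributes nothing off $\{u,v\}$. Rearranging, $\varphi(x)=\lambda^{-1}\sum_{y\sim x}\varphi(y)$. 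The key idea is to iterate this relation: each time we expand a term $\varphi(y)$ with $y\neq u,v$ we pick up another factor of $\lambda^{-1}$ and sum over neighbors, i.e.\ we extend the walk by one step while staying away from $u,v$; we stop expanding exactly when the walk hits $u$ or $v$. Because $\lambda>m$, the partial sums converge absolutely (the tail after $N$ steps is bounded by $\mu$ or $\nu$ times $(m/\lambda)^N/(1-m/\lambda)$), so the iteration is legitimate and in the limit gives $\varphi(x)=\sum_{P\in\cP_{xu}}\lambda^{-|P|}\mu+\sum_{P\in\cP_{xv}}\lambda^{-|P|}\nu=\mu Z_{xu}(\lambda)+\nu Z_{xv}(\lambda)$, which is~\eqref{eq:phi_exp}; here it is important that walks in $\cP_{xy}$ are allowed to touch $u,v$ only at the endpoints, matching the stopping rule. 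This also shows uniqueness: $\varphi$ is entirely determined by $\mu,\nu$ and $\lambda$. Finally, writing the eigenvalue equation at $x=u$ and $x=v$ — where the potential does contribute, giving $(\lambda-Q)\varphi(u)=\sum_{y\sim u}\varphi(y)$ and similarly at $v$ — and substituting~\eqref{eq:phi_exp} for the neighbors (noting that a neighbor could itself be $u$ or $v$, which is accounted for by the length-$1$ and length-$2$ walks), one collects the coefficients into exactly the $2\times2$ system~\eqref{mtx:Z} after dividing by $\lambda$.

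For the ``if'' direction, I would run the construction in reverse. Given $\mu,\nu$ satisfying~\eqref{mtx:Z}, define $\varphi$ by $\varphi(u)=\mu$, $\varphi(v)=\nu$, and $\varphi(x)=\mu Z_{xu}(\lambda)+\nu Z_{xv}(\lambda)$ for $x\neq u,v$; by Remark~\ref{rem:z_bound} this is a well-defined bounded vector since $\lambda>m$. One then checks $H\varphi=\lambda\varphi$ coordinate by coordinate. At a vertex $x\neq u,v$, the verification is a combinatorial identity for the $Z$'s: $\sum_{y\sim x}Z_{yu}(\lambda)=\lambda Z_{xu}(\lambda)$ when $x\neq u$, which holds because every walk from $x$ to $u$ avoiding $\{u,v\}$ internally, read backwards from its first step, is a walk from some neighbor $y$ of $x$ to $u$ of length one less — here one has to treat carefully the case where $x$ is adjacent to $u$ (the length-$1$ walk) or to $v$, and the case where $y=v$ is simply disallowed, consistent with the definition of $\cP$. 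Combining the two identities (for $Z_{\cdot u}$ and $Z_{\cdot v}$) with weights $\mu,\nu$ gives $(H\varphi)(x)=\lambda\varphi(x)$. At $x=u$, $(H\varphi)(u)=Q\mu+\sum_{y\sim u}\varphi(y)$, and expanding the $\varphi(y)$ via their definition and collecting terms produces $Q\mu+\mu(\lambda Z_{uu}(\lambda))+\nu(\lambda Z_{uv}(\lambda))$ up to the bookkeeping of short walks; the requirement that this equal $\lambda\mu$ is precisely the first row of~\eqref{mtx:Z}, and similarly the second row handles $x=v$.

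The main obstacle, and the place I would be most careful, is the bookkeeping around the vertices $u$ and $v$ themselves: the sets $\cP_{xy}$ permit walks to touch $u,v$ only as endpoints, so when $x$ is close to $u$ or $v$ the ``extend the walk by one step'' bijection has boundary cases (a neighbor of $x$ equal to $u$ or $v$, a walk of length $0$ or $1$, the self-return walks counted in $\cP_{uu}$ and $\cP_{vv}$ but excluding the trivial walk). Getting the $\lambda$ versus $\lambda-Q$ split right at $u,v$ — equivalently, seeing why the factor $(1-Q/\lambda)$ and not $1$ appears on the right of~\eqref{mtx:Z} — is exactly where the potential enters, and it must be tracked consistently between the two directions. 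The convergence issues are routine given $\lambda>m$ and Remark~\ref{rem:z_bound}, so they pose no real difficulty; the combinatorial recursion $\sum_{y\sim x}Z_{yw}(\lambda)=\lambda Z_{xw}(\lambda)$ (for $x\neq w$) together with its modification at $w$ itself is the technical heart, and everything else is linear algebra.
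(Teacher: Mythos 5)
Your proposal is correct and follows essentially the same route as the paper: iterate the eigenvalue equation away from $\{u,v\}$, stopping walks when they hit $u$ or $v$, to get the extension formula \eqref{eq:phi_exp} (with convergence from $\lambda>m$), and then observe that being an eigenvector is equivalent to the two remaining equations at $u,v$, which after dividing by $\lambda$ are exactly \eqref{mtx:Z}. The only difference is that you spell out the converse verification via the walk recursion for $Z$, which the paper compresses into ``it is easy to see.''
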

\begin{proof}
First consider a function $f: V(G)\rightarrow \mathbb{R}$ such that $f(u)=\mu$, $f(v)=\nu$, and $Hf(x)=\lambda f(x)$ for all $x\in V(G)\backslash \{u,v\}$. For any $x \neq u,v$ we then have 
\[ f(x)=\frac{1}{\lambda}Hf(x) =\sum_{y \sim x}  \frac{f(y)}{\lambda} .\]  We iterate this for all $y\neq u,v$ appearing in the above sum for up to $k$ steps. Whenever $u$ or $v$ shows up during the iteration, we keep that term as it is and only iterate on the other terms. Then we get 
\[ f(x) = \sum_{y \neq u,v} \sum_{P \in \cP_{xy} : |P| = k} \frac{f(y)}{\lambda^k} + f(u) \sum_{P \in \cP_{xu} : |P| \leq k} \frac{1}{\lambda^{|P|}} +f(v) \sum_{P \in \cP_{xv} : |P| \leq k} \frac{1}{\lambda^{|P|}}.\]
When $k \to \infty$, we see from Remark~\ref{rem:z_bound} that the first sum on the right hand side goes to 0 because it is bounded by $\left(\frac{m}{\lambda}\right)^k \sum_{y \neq u,v} |f(y)|$ since $\lambda > m$. The other two terms converge to $f(u) Z_{xu}(\lambda)$ and $f(v) Z_{xv}(\lambda)$ respectively. So we get that  
\begin{align}\label{function:f(x)}
    \forall x\neq u,v : f(x)= \mu Z_{xu}(\lambda) + \nu Z_{xv}(\lambda); f(u) = \mu; f(v) = \nu, 
\end{align} in particular we see that \eqref{eq:phi_exp} holds.

Conversely, it is easy to see that \eqref{function:f(x)} indeed defines a function $f: V(G)\rightarrow \mathbb{R}$ such that $f(u)=\mu$, $f(v)=\nu$, and $Hf(x)=\lambda f(x)$ for all $x\in V(G)\backslash \{u,v\}$. Hence there is always a unique function $f: V(G) \to \mathbb{R}$ that satisfies these conditions.

Thus, to finish the proof, it is sufficient to determine, when is $f$ actually an eigenvector of $H$. It happens exactly when $\lambda f(x) = Hf(x)$ is also satisfied at $x= u,v$,
\begin{align}\label{eq:lambdaf=hf}
    \lambda f(u)=Qf(u)+\sum_{x\sim u}f(x) \text{  ;  } \lambda f(v)=Qf(v)+\sum_{x\sim v}f(x)
\end{align}
Dividing both sides of equations \eqref{eq:lambdaf=hf} by $\lambda$, rearranging, and plugging in \eqref{function:f(x)} for each $f(x)$ in the sums, we get the following equality for both  $z=u$ and $z=v$:
\begin{align}\label{eq:f(v)}
    \left(1-\frac{Q}{\lambda}\right)f(z) = \mu \cdot \sum_{x \sim z} \frac{1}{\lambda} Z_{xu}(\lambda) + \nu \cdot \sum_{x \sim z} \frac{1}{\lambda}Z_{xv}(\lambda) = \mu Z_{zu}(\lambda) + \nu Z_{zv}(\lambda)   
\end{align} which, for $z=u,v$ gives exactly the $2\times 2$ linear system \eqref{mtx:Z}.
\end{proof}

\begin{rem}
As we see from the preceding theorem, for two real numbers $\mu$ and $\nu$ to be the entries $\varphi(u)$, $\varphi(v)$ of the eigenvector $\varphi$ of the matrix $H = A + Q \cdot D_{u,v}$ corresponding to eigenvalue $\lambda$, they should satisfy the condition of forming an eigenvector of the $2\times 2$ matrix described in \eqref{mtx:Z}. 

A particularly nice feature of \eqref{mtx:Z} is that the only dependence on $Q$ is in the factor on the right-hand side. So, for any given $\lambda > m$ one can consider the $2\times 2$ matrix on the left, compute an eigenvector $(\mu,\nu)$ and corresponding eigenvalue $\rho < 1$, and then choose $Q = \lambda(1-\rho)$. For this choice of $Q$, then $\lambda$ is guaranteed to be an eigenvalue of $H$ and the corresponding eigenvector $\varphi$, scaled appropriately, will satisfy $\varphi(u) = \mu$ and $\varphi(v) = \nu$.
\end{rem}

\begin{corollary}\label{cor:phi_bound}
Using the previous notation, if $\lambda \geq C \cdot m^{3/2}$ is an eigenvalue of $H$ for some $C > 2$ then the corresponding eigenvector satisfies
\begin{equation} \varphi(u)^2 + \varphi(v)^2 \geq \norm{\varphi}^2 \left(1 - \frac{22}{C^2}\right)
\end{equation}
\end{corollary}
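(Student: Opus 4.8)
The plan is to use Theorem~\ref{thm:z} to turn the statement into a quantitative bound on how much of the mass of $\varphi$ sits away from $\{u,v\}$. Write $\mu=\varphi(u)$ and $\nu=\varphi(v)$; these cannot both vanish, since by \eqref{eq:phi_exp} that would force $\varphi\equiv 0$. Formula \eqref{eq:phi_exp} then gives $\varphi(x)=\mu Z_{xu}(\lambda)+\nu Z_{xv}(\lambda)$ for every $x\neq u,v$, so using $(a+b)^2\le 2a^2+2b^2$ and summing, the whole problem reduces to proving
\[ \sum_{x\neq u,v}Z_{xu}(\lambda)^2 \le \frac{11}{C^2} \qquad\text{and}\qquad \sum_{x\neq u,v}Z_{xv}(\lambda)^2 \le \frac{11}{C^2}. \]
Indeed, from these one gets $\sum_{x\neq u,v}\varphi(x)^2 \le \tfrac{22}{C^2}(\mu^2+\nu^2)$, and since $\norm{\varphi}^2=\mu^2+\nu^2+\sum_{x\neq u,v}\varphi(x)^2$, a one-line rearrangement (with $1/(1+t)\ge 1-t$) yields $\varphi(u)^2+\varphi(v)^2=\mu^2+\nu^2\ge(1-\tfrac{22}{C^2})\norm{\varphi}^2$.

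The heart of the matter is thus the estimate on $\sum_{x\neq u,v}Z_{xu}(\lambda)^2$. Here I would invoke Remark~\ref{rem:z_bound}, which gives $Z_{xu}(\lambda)\le (m/\lambda)^{d(x,u)}(1-m/\lambda)^{-1}$, hence $Z_{xu}(\lambda)^2\le(m/\lambda)^{2d(x,u)}(1-m/\lambda)^{-2}$. Grouping the vertices $x\neq u,v$ by their distance $j\ge 1$ from $u$, and using that a graph of maximum degree $m$ has at most $m^j$ vertices at distance $j$ from any fixed vertex, this gives
\[ \sum_{x\neq u,v}Z_{xu}(\lambda)^2 \le \frac{1}{(1-m/\lambda)^2}\sum_{j\ge1}m^j\Big(\frac{m}{\lambda}\Big)^{2j} = \frac{1}{(1-m/\lambda)^2}\sum_{j\ge1}\Big(\frac{m^3}{\lambda^2}\Big)^j. \]
The hypothesis $\lambda\ge C m^{3/2}$ is exactly what makes $m^3/\lambda^2\le 1/C^2<1$, so the geometric series is at most $\tfrac{1}{C^2-1}$; and since $m\ge 1$ it also gives $m/\lambda\le 1/(C\sqrt m)\le 1/C<1/2$ — this is the only place $C>2$ is used — so $(1-m/\lambda)^{-2}<4$. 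Altogether $\sum_{x\neq u,v}Z_{xu}(\lambda)^2 < \tfrac{4}{C^2-1}\le \tfrac{11}{C^2}$ for $C>2$, and the same bound for $v$ by symmetry.

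There is no genuine obstacle: the argument is a chain of elementary estimates, and the only points needing care are bookkeeping. One should check at the start that $\lambda>m$, so that Theorem~\ref{thm:z} and Remark~\ref{rem:z_bound} apply; this is immediate since $\lambda\ge Cm^{3/2}\ge Cm>2m>m$ using $m\ge 1$. The one informative step — and the reason the scale in the statement is $m^{3/2}$ rather than $m$ — is the balancing in the displayed series: the squared tunneling amplitudes $Z_{xu}(\lambda)^2$ decay like $(m/\lambda)^{2j}$ in the distance $j$, while the number of vertices at distance $j$ can grow like $m^j$, so a uniform bound forces $\lambda^2\gtrsim m^3$. A sharper constant than $22$ is available (replace the crude ball bound $m^j$ by $m(m-1)^{j-1}$, and use Cauchy–Schwarz to drop the factor $2$), but it is not needed here.
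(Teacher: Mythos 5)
Your proof is correct and follows essentially the same route as the paper: both use the extension formula \eqref{eq:phi_exp} together with the bound of Remark~\ref{rem:z_bound}, group vertices by distance with the ball-growth bound $m^j$, and sum the geometric series in $m^3/\lambda^2$, differing only in bookkeeping (you split the two $Z$-terms via $(a+b)^2\le 2a^2+2b^2$, while the paper bounds $|f(x)|$ by the larger of $Z_{xu},Z_{xv}$ at the minimum distance $r(x)$ with at most $2m^k$ vertices per level). Your constants check out, so no changes are needed.
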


\begin{proof}
 First we bound $f(x)$ pointwise using Remark~\ref{rem:z_bound}:
\[ |f(x)| \leq (|\mu| + |\nu|)\max\{ Z_{xu}(\lambda), Z_{xv}(\lambda)\} \leq \frac{|\mu|+|\nu|}{1-\frac{m}{\lambda}}\left(\frac{m}{\lambda}\right)^{r(x)} \] where $r(x) = \min\{ d(x,u), d(x,v)\}$. This implies
\[ \sum_{x \neq u,v} f(x)^2 \leq \left( \frac{|\mu|+|\nu|}{1-\frac{m}{\lambda}} \right)^2 \sum_{x\neq u,v} \left(\frac{m}{\lambda}\right)^{2r(x)} \leq 2(\mu^2+\nu^2)\left(1-\frac{m}{\lambda}\right)^{-2}\sum_{k=1}^\infty \left(\frac{m}{\lambda}\right)^{2k} |\{ x: r(x)=k\}|.\] In a graph of maximum degree $m$ there can be no more than $m^k$ nodes at distance $k$ from any given node, hence $|\{ x: r(x)=k\}| \leq 2m^k$. This yields
\[ \frac{\sum_{x \neq u,v} f(x)^2}{\mu^2+\nu^2} \leq \frac{4}{\left(1-\frac{m}{\lambda}\right)^2} \sum_{k=1}^\infty \left(\frac{m^3}{\lambda^2}\right)^k \leq \frac{4}{ (1-\frac{1}{C \sqrt{m}})^2 C^2 (1-\frac{1}{C^2})}\leq \frac{64}{3 C^2} \] and hence 
\[ \norm{\varphi}^2 \leq (\mu^2+ \nu^2)\left(1+ \frac{64}{3C^2}\right) \leq (\varphi(u)^2 + \varphi(v)^2)  \left(1 + \frac{22}{C^2}\right)\]  whence the claim follows.
\end{proof}

\subsection{Cospectrality}\label{se:cospectral}

Let us recall the following definitions and theorem from~\cite{Tunneling_2012}.
\begin{definition}\label{def:cos}
    Two vertices, $u,v\in V(G)$ are \textbf{$k$-cospectral} if, for any positive integer $t\leq k$ the number of closed walks of length $t$ at $u$ and $v$ are equal to each other, or equivalently, if $\forall t \leq k : n_t(uu) = n_t(vv)$. The \textbf{cospectrality} of $u$ and $v$, denoted by $co(u,v)$ is the maximum number $m$ such that $u$ and $v$ are $m$-cospectral.
\end{definition}
\begin{definition}\label{def:TC}
    Define the tunneling coefficient from state $u$ to $v$ to be
    \begin{align}\label{eq:infsup}
    TC(u,v)=\liminf_{Q\rightarrow \infty}{\sup_{t\geq 0} p(t)}.
\end{align}
    We say there is 
    \begin{enumerate}
        \item asymptotic tunneling (strong state transfer) if $TC(u,v)=1$
        \item partial tunneling if $0<TC(u,v)<1$
        \item no tunneling if $TC(u,v)=0$
    \end{enumerate}
\end{definition}

The relationship between cospectrality and tunneling behavior was studied in~\cite{Tunneling_2012} where the following was shown for the random walk transition matrix in place of the adjacency matrix (however the proofs carry over without much modification to the current case):

\begin{theorem}\cite{Tunneling_2012}\label{thm:tunneling}
    Let there be two vertices $u$ and $v$ and let $d=\text{dist}(u,v)$, then the following is true:
    \begin{enumerate}
        \item if $co(u,v)\geq d$ then there is asymptotic tunneling
        \item if $co(u,v)= d-1$ then there is partial tunneling
        \item if $co(u,v)< d-1$ then there is no tunneling
    \end{enumerate}
\end{theorem}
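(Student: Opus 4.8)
The plan is to follow the qualitative core of the argument of~\cite{Tunneling_2012}: since no quantitative bound is asked for, it suffices to track the limiting behaviour of the top two eigenpairs of $H$ as $Q\to\infty$. First I would isolate these two eigenpairs. By Weyl's inequalities applied to $H=A+Q\cdot D_{u,v}$ (the $A$ part shifts eigenvalues by at most $\|A\|\le m$, while $Q\cdot D_{u,v}$ has spectrum $Q,Q,0,\dots,0$), one gets $\lambda_1,\lambda_2\ge Q-m$ and $\lambda_3,\dots,\lambda_n\in[-m,m]$ for every $Q$, so for $Q$ large exactly two eigenvalues escape to $+\infty$, and they are distinct: were $\lambda_1=\lambda_2=:\lambda>m$, the map $\varphi\mapsto(\varphi(u),\varphi(v))$ on the $2$-dimensional eigenspace would be injective by \eqref{eq:phi_exp}, hence onto $\RR^2$, which by Theorem~\ref{thm:z} would force the $2\times2$ matrix on the left of \eqref{mtx:Z} to be scalar, contradicting $Z_{uv}(\lambda)=\sum_k n_k(uv)\lambda^{-k}>0$. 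Then Corollary~\ref{cor:phi_bound}, applied with $C=(Q-m)m^{-3/2}\to\infty$, shows $\varphi_1,\varphi_2$ concentrate on $\{u,v\}$, i.e. $\varphi_i(u)^2+\varphi_i(v)^2=1-o(1)$; together with $\langle\varphi_1,\varphi_2\rangle=0$, the vectors $w_i:=(\varphi_i(u),\varphi_i(v))\in\RR^2$ form an $o(1)$-approximately orthonormal pair, so up to sign and $o(1)$ we may write $w_1\approx(\cos\theta,\sin\theta)$, $w_2\approx(-\sin\theta,\cos\theta)$ for an angle $\theta=\theta(Q)$.

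The heart of the matter is $\lim_{Q\to\infty}\theta(Q)$. By Theorem~\ref{thm:z} each $w_i$ is an eigenvector of the $2\times2$ matrix on the left of \eqref{mtx:Z} at $\lambda=\lambda_i$, whose eigen-directions make an angle with $\tan 2\theta(\lambda)=\dfrac{2Z_{uv}(\lambda)}{Z_{uu}(\lambda)-Z_{vv}(\lambda)}$ (the directions being $(1,\pm1)/\sqrt2$, i.e. $\theta=\pi/4$, exactly when $Z_{uu}=Z_{vv}$). From the power-series form of $Z$ we have $Z_{uv}(\lambda)=n_d(uv)\lambda^{-d}(1+o(1))$ with $n_d(uv)\ge1$ (a shortest path from $u$ to $v$ is admissible), and, writing $c=co(u,v)$, $Z_{uu}(\lambda)-Z_{vv}(\lambda)=(n_{c+1}(uu)-n_{c+1}(vv))\lambda^{-(c+1)}(1+o(1))$ with nonzero leading coefficient when $c<\infty$ (by definition of $co$), this difference vanishing identically when $c=\infty$. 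Comparing the exponents $d$ and $c+1$ as $\lambda\to\infty$: if $c\ge d$ the off-diagonal dominates and $\sin^2 2\theta(\lambda)\to 1$; if $c=d-1$ the two quantities have the common order $\lambda^{-d}$ and $\tan 2\theta(\lambda)\to 2n_d(uv)/(n_d(uu)-n_d(vv))$, a finite nonzero number, so $\sin^2 2\theta(\lambda)\to L$ for some $L\in(0,1)$; if $c\le d-2$ the diagonal difference dominates and $\sin^2 2\theta(\lambda)\to 0$. Since $\lambda_1,\lambda_2\to\infty$ while $w_1,w_2$ stay approximately orthogonal, one of $w_1,w_2$ has angle $\theta^*+o(1)$ and the other $\theta^*+\pi/2+o(1)$, where $\sin^2(2\theta^*)$ is this limit.

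It remains to read off $TC(u,v)$. Setting $a_j=\varphi_j(u)\varphi_j(v)$, so $p(t)=\bigl|\sum_j a_j e^{it\lambda_j}\bigr|^2$, the concentration bound together with Cauchy--Schwarz gives $\sum_{j\ge3}|a_j|\le\sqrt{(1-\varphi_1(u)^2-\varphi_2(u)^2)(1-\varphi_1(v)^2-\varphi_2(v)^2)}=o(1)$, while $|a_1|+|a_2|=|\sin 2\theta^*|+o(1)$ and, since $a_1a_2\le0$ in the limit, $|a_1-a_2|=|a_1|+|a_2|=|\sin 2\theta^*|+o(1)$. The triangle inequality yields the upper bound $\sup_t p(t)\le(|a_1|+|a_2|+\sum_{j\ge3}|a_j|)^2=\sin^2(2\theta^*)+o(1)$, and evaluating at $t_0=\pi/(\lambda_1-\lambda_2)$, where $e^{it_0\lambda_1}=-e^{it_0\lambda_2}$ (legitimate since $\lambda_1\ne\lambda_2$), gives the matching lower bound $\sup_t p(t)\ge p(t_0)\ge(|a_1-a_2|-\sum_{j\ge3}|a_j|)^2=\sin^2(2\theta^*)+o(1)$. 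Hence $TC(u,v)=\liminf_{Q\to\infty}\sup_t p(t)=\sin^2(2\theta^*)$, and the three cases of the theorem are exactly the three regimes of the previous paragraph.

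The uniform control of the $o(1)$ error terms in $Q$ is a technical nuisance, but it is handed over directly by Corollary~\ref{cor:phi_bound} and Theorem~\ref{thm:z}, so it is mere bookkeeping. The place I expect to need genuine care is the borderline case $c=d-1$, which must give a strictly two-sided conclusion: $n_d(uv)\ge1$ rules out $\theta^*=0$, so $TC>0$; and $n_d(uu)\ne n_d(vv)$ — which holds precisely because $co(u,v)=d-1$ — rules out $\theta^*=\pi/4$, so $TC<1$. Everything else reduces to comparing the two exponents $d$ and $c+1$.
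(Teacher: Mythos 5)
Your argument is correct, but note that the paper does not actually prove this statement: it is imported from~\cite{Tunneling_2012} with only the remark that the proofs carry over from the random-walk setting, so there is no internal proof to match. What you have written is a legitimate self-contained derivation built on the paper's own machinery, and for the case $c\ge d$ it essentially reproduces, qualitatively, the route of Theorem~\ref{main}: Gershgorin/Weyl separation of $\lambda_1,\lambda_2$, concentration of $\varphi_1,\varphi_2$ on $\{u,v\}$ via Corollary~\ref{cor:phi_bound}, the orthogonality-based sign argument, and evaluation at $t_0=\pi/(\lambda_1-\lambda_2)$. What you add beyond the paper is the treatment of the borderline regimes through the eigendirection angle of the $2\times2$ matrix in \eqref{mtx:Z}, $\tan 2\theta(\lambda)=2Z_{uv}(\lambda)/(Z_{uu}(\lambda)-Z_{vv}(\lambda))$, whose limit as $\lambda\to\infty$ is governed by comparing the exponents $d$ and $c+1$; this cleanly yields all three cases and even the exact limiting value $TC=\sin^2 2\theta^*$ with $\tan 2\theta^*=2n_d(uv)/(n_d(uu)-n_d(vv))$ in the partial-tunneling case, which is more than the statement requires (the paper's Lemma~\ref{lemma:ratio} only exploits the $c\ge d$ side of this comparison). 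Your auxiliary steps check out: the non-degeneracy $\lambda_1\neq\lambda_2$ via injectivity of $\varphi\mapsto(\varphi(u),\varphi(v))$ from \eqref{eq:phi_exp} together with $Z_{uv}>0$ (a shortest $u$--$v$ path is an admissible walk), the near-orthonormality of the projected vectors, and the Cauchy--Schwarz bound on $\sum_{j\ge3}|\varphi_j(u)\varphi_j(v)|$. Two small caveats: the uniform $o(1)$ control you defer is indeed available from Remark~\ref{rem:z_bound}, but in a fully written proof you should make the leading-term estimates $Z_{uv}(\lambda)=n_d(uv)\lambda^{-d}(1+O(m^{d+1}/\lambda))$ and $Z_{uu}(\lambda)-Z_{vv}(\lambda)=(n_{c+1}(uu)-n_{c+1}(vv))\lambda^{-(c+1)}(1+O(m^{c+2}/\lambda))$ explicit; and you are implicitly using the paper's convention (stated as an equivalence in Definition~\ref{def:cos}) that cospectrality can be read off from the $u,v$-avoiding closed-walk counts $n_t(uu),n_t(vv)$, which is what makes the cancellation up to order $c$ and the nonvanishing of the order-$(c+1)$ coefficient legitimate.
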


This theorem suggests that $co(u,v) \geq d$ is a reasonable assumption in order to achieve strong state transfer using large values of $Q$. We have seen that, for large enough $Q$, strong state transfer depends on $|\varphi_j(u)| \approx |\varphi_j(v)|$ for $j = 1,2$. We will now show that $co(u,v) \geq d$ indeed implies this in a quantitative way.

\begin{lemma}\label{lemma:ratio}
Using the notation from Section~\ref{sec:extension}, if $\lambda >  m$ is an eigenvalue of $H$ and $c \geq d$ then the corresponding eigenvector $\varphi$ satisfies
\[ \left| \frac{\varphi(u)}{\varphi(v)} - \frac{\varphi(v)}{\varphi(u)}\right| \leq \frac{m^{c+1}}{\lambda^{c-d}(\lambda - m)} \] where $c = co(u,v)$ and $d=\text{dist}(u,v)$.
\end{lemma}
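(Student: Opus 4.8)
The plan is to use Theorem~\ref{thm:z} to reduce the statement to an estimate on the entries of the symmetric $2\times 2$ matrix $M(\lambda)=\begin{bmatrix} Z_{uu}(\lambda) & Z_{uv}(\lambda)\\ Z_{uv}(\lambda) & Z_{vv}(\lambda)\end{bmatrix}$. Writing $\mu=\varphi(u)$, $\nu=\varphi(v)$ and $\rho=1-Q/\lambda$, Theorem~\ref{thm:z} says that $M(\lambda)\binom{\mu}{\nu}=\rho\binom{\mu}{\nu}$. First I would rule out the degenerate cases so that the ratios in the statement make sense: if $\mu=\nu=0$ then \eqref{eq:phi_exp} forces $\varphi\equiv 0$, contradicting that $\varphi$ is an eigenvector; and if exactly one of them, say $\nu$, is zero, then the second coordinate of the eigenvector equation reads $Z_{uv}(\lambda)\mu=0$, which is impossible since $Z_{uv}(\lambda)\ge \lambda^{-d}>0$ (a shortest $u$--$v$ walk has length $d$, cannot revisit $u$ or $v$, and hence lies in $\cP_{uv}$) and $\mu\neq 0$. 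So both $\mu$ and $\nu$ are nonzero.

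Next I would extract a clean identity from the two scalar equations $Z_{uu}\mu+Z_{uv}\nu=\rho\mu$ and $Z_{uv}\mu+Z_{vv}\nu=\rho\nu$: multiplying the first by $\nu$, the second by $\mu$, and subtracting, the $\rho\mu\nu$ terms cancel and one obtains $(Z_{uu}-Z_{vv})\mu\nu=Z_{uv}(\mu^2-\nu^2)$, that is,
\[ \frac{\varphi(u)}{\varphi(v)}-\frac{\varphi(v)}{\varphi(u)}=\frac{Z_{uu}(\lambda)-Z_{vv}(\lambda)}{Z_{uv}(\lambda)}. \]
At this point the problem is purely about the power series $Z$, and no further spectral information about $H$ is needed.

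For the numerator I would invoke the hypothesis $c=co(u,v)\ge d$: by Definition~\ref{def:cos} we have $n_k(uu)=n_k(vv)$ for every $k\le c$, so $Z_{uu}(\lambda)-Z_{vv}(\lambda)=\sum_{k>c}\bigl(n_k(uu)-n_k(vv)\bigr)\lambda^{-k}$ has no terms below order $c+1$; bounding each coefficient by $|n_k(uu)-n_k(vv)|\le m^k$ and summing the resulting geometric series (convergent since $\lambda>m$) gives $|Z_{uu}(\lambda)-Z_{vv}(\lambda)|\le \sum_{k=c+1}^\infty (m/\lambda)^k=\frac{m^{c+1}}{\lambda^c(\lambda-m)}$. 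For the denominator I would use $Z_{uv}(\lambda)\ge n_d(uv)\lambda^{-d}\ge \lambda^{-d}$ as above. Dividing the two estimates yields $\bigl|\tfrac{\varphi(u)}{\varphi(v)}-\tfrac{\varphi(v)}{\varphi(u)}\bigr|\le \frac{m^{c+1}}{\lambda^{c-d}(\lambda-m)}$, which is the claim. The computation is essentially routine; the only points that need a little care are excluding the $\mu=0$ and $\nu=0$ cases and correctly lining up the cospectrality index $c$ with the order at which the series $Z_{uu}-Z_{vv}$ begins.
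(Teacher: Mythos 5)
Your proposal is correct and follows essentially the same route as the paper: use Theorem~\ref{thm:z} to get the $2\times 2$ eigenvector equation, derive the identity $(\mu^2-\nu^2)Z_{uv}=\mu\nu(Z_{uu}-Z_{vv})$, bound the numerator via cancellation of the $k\le c$ terms and the geometric series $\sum_{k>c}(m/\lambda)^k$, and bound the denominator below by $\lambda^{-d}$ using a shortest $u$--$v$ path. Your explicit check that $\mu,\nu\neq 0$ is a small extra piece of care that the paper's proof leaves implicit, but it does not change the argument.
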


\begin{proof}
Let $\mu = \varphi(u)$ and $\nu = \varphi(v)$. By Theorem \ref{thm:z}, 
$\begin{bmatrix}
    \mu\\
    \nu
\end{bmatrix}$ is an eigenvector of the $Z$ matrix in \eqref{mtx:Z}, and dividing the two resulting equations to cancel the common eigenvalue yields
\[ \frac{\mu}{\nu} = \frac{\mu Z_{uu}(\lambda) + \nu Z_{uv}(\lambda)}{\mu Z_{uv}(\lambda) + \nu Z_{vv}(\lambda)}
.\] After rearranging we get 
\[ (\mu^2 - \nu^2)Z_{uv}(\lambda) = \mu \nu (Z_{uu}(\lambda) - Z_{vv}(\lambda)),\] and finally
\begin{align*}
   \left| \frac{\mu}{\nu}-\frac{\nu}{\mu}\right|=
   \left| \frac{Z_{uu}(\lambda)-Z_{vv}(\lambda)}{Z_{uv}(\lambda)}\right|
    =\frac{\left|\sum_{k=2}^\infty \frac{n_k(u,u)}{\lambda^{k}}-\sum_{k=2}^\infty\frac{n_k(v,v)}{\lambda^{k}}\right|}{\sum_{k=d}^\infty\frac{n_k(u,v)}{\lambda^{k}}}
\end{align*}
Here the denominator is at least $1/\lambda^d$ while the $k \leq c$ terms cancel in the numerator, hence 
\[ 
\left| \frac{\mu}{\nu}-\frac{\nu}{\mu}\right| \leq \lambda^d \sum_{k>c} \frac{|n_k(uu) - n_k(vv)|}{\lambda^k} \leq \lambda^d \sum_{k>c} \left(\frac{m}{\lambda}\right)^k = \frac{m^{c+1}}{\lambda^{c-d}(\lambda - m)} 
\] as claimed.
\end{proof}


\section{Bounding Fidelity in the Presence of Quantum Tunneling}\label{se:mainthm}

In this section, we prove our main estimate: a lower bound of the transfer strength of the quantum walk between two vertices that exhibit asymptotic state transfer as defined in Definition~\ref{def:TC}. This bound is a function of the potential $Q$ that is applied to the nodes $u$ and $v$, along with the maximum degree of the graph, and in certain cases, the distance and cospectrality of the nodes $u,v$. 
\begin{theorem}\label{main}
    Consider a simple connected graph $G=(V, E)$ with maximum degree $m$. Let $u$ and $v$ be two vertices exhibiting asymptotic state transfer, meaning their cospectrality $c = co(u,v)$ is at least their distance $d = d(u,v)$. After adding a potential $Q$ to both vertices, the transfer strength from $u$ to $v$ at time $t_0 = \pi/(\lambda_1 - \lambda_2)$ satisfies
 \[       p(t_0)\geq 1 - 8 \max\left\{ \frac{22m^3}{(Q-m)^2},\frac{m^{c+1}}{(Q-2m)^{c-d+1}}  \right\} \]
\end{theorem}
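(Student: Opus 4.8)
The strategy is to bound the last expression in \eqref{ineq:p(t)} from below using the two key estimates developed above: Corollary~\ref{cor:phi_bound} for the concentration of $\varphi_1,\varphi_2$ on the $\{u,v\}$ subspace, and Lemma~\ref{lemma:ratio} for the near-equality of $|\varphi_j(u)|$ and $|\varphi_j(v)|$. First I would establish that for $Q$ large enough, the top two eigenvalues $\lambda_1,\lambda_2$ are both large --- concretely, $\lambda_i \ge Q - \text{something like } 2m$ --- so that they clear the thresholds $m$, $2m$ and $Cm^{3/2}$ needed to invoke the earlier results. This is a Gershgorin-type argument: $H = A + Q D_{u,v}$ has two diagonal entries equal to $Q$ and the off-diagonal mass in those rows is at most $m$ in absolute value, so two eigenvalues lie in $[Q-m, Q+m]$ while $A$'s contribution bounds the rest below $m$ in absolute value; hence $\lambda_1,\lambda_2 \ge Q-m$ and $|\lambda_j| \le m$ for $j \ge 3$.

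Next, write $a_j = \varphi_j(u)$, $b_j = \varphi_j(v)$ for $j=1,2$. From Corollary~\ref{cor:phi_bound} applied to $\lambda_1,\lambda_2$ (with $C$ determined by $Q-m \ge C m^{3/2}$, so $C^{-2} \le 22 m^3/(Q-m)^2$ roughly) we get $a_j^2 + b_j^2 \ge 1 - \varepsilon_1$ where $\varepsilon_1 = O(m^3/(Q-m)^2)$, which controls the subtracted square-root term in \eqref{ineq:p(t)} since $1 - a_1^2 - a_2^2 \le 1 - a_1^2 - b_1^2 \cdot(\text{...})$ --- more carefully, I'd use that $\varphi_1,\varphi_2$ are orthonormal and nearly supported on $\Span\{e_u,e_v\}$, so the $2\times 2$ Gram matrix of their $\{u,v\}$-projections is close to the identity, giving $a_1^2 + a_2^2 \ge 1 - \varepsilon_1'$ and $b_1^2 + b_2^2 \ge 1 - \varepsilon_1'$ simultaneously with a comparable $\varepsilon_1'$. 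Then from Lemma~\ref{lemma:ratio}, $|a_j/b_j - b_j/a_j| \le \varepsilon_2$ with $\varepsilon_2 = m^{c+1}/(\lambda_j^{c-d}(\lambda_j - m)) = O(m^{c+1}/(Q-2m)^{c-d+1})$; combined with $a_j^2 + b_j^2 \approx 1$ this forces $|a_j^2 - b_j^2| = |a_j b_j|\cdot\varepsilon_2$ to be small, hence $a_j b_j \approx \pm 1/2$ with a sign for each $j$, and $|a_1 b_1 - a_2 b_2| \ge 1 - O(\max(\varepsilon_1,\varepsilon_2))$ once one checks the signs are opposite (which follows from orthogonality $a_1 a_2 + b_1 b_2 \approx 0$ together with $a_j^2 \approx b_j^2 \approx 1/2$). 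Plugging these into \eqref{ineq:p(t)}: the main term is $\ge 1 - O(\max(\varepsilon_1,\varepsilon_2))$ and the subtracted term is $\le O(\varepsilon_1)$, so $|U(t_0)_{u,v}| \ge 1 - O(\max(\varepsilon_1,\varepsilon_2))$, and squaring gives $p(t_0) \ge 1 - O(\max(\varepsilon_1,\varepsilon_2))$ with the explicit constant $8$ and the two displayed fractions after tracking all the numerical factors.

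The main obstacle, I expect, is the sign bookkeeping and the precise propagation of constants --- in particular, converting the "ratio is small" estimate of Lemma~\ref{lemma:ratio} into a two-sided bound on $a_j b_j$ near $\pm 1/2$ requires knowing $a_j b_j$ is bounded away from $0$, which in turn uses that $a_j^2 \approx b_j^2$ and $a_j^2 + b_j^2 \approx 1$; one has to be careful that the eigenvector with eigenvalue near $Q$ does not have, say, $a_j$ tiny and $b_j \approx 1$. This is ruled out precisely by Lemma~\ref{lemma:ratio} (if $a_j \to 0$ then $|a_j/b_j - b_j/a_j| \to \infty$), but making the quantitative version clean, and ensuring the two eigenvectors' sign pattern is genuinely opposite rather than merely "not equal," is where the care is needed. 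The rest is assembling the inequality \eqref{ineq:p(t)} and optimizing the constant.
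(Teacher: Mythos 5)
Your plan is correct and follows the paper's proof essentially step for step: Gershgorin localization of $\lambda_1,\lambda_2$ in $[Q-m,Q+m]$ with the remaining spectrum in $[-m,m]$, Corollary~\ref{cor:phi_bound} for concentration on $\{u,v\}$, Lemma~\ref{lemma:ratio} to force $\mu_j^2,\nu_j^2 \approx 1/2$, an orthogonality-based sign argument, and substitution into \eqref{ineq:p(t)}; the only cosmetic difference is that the paper fixes the sign pattern of $\varphi_1$ via Perron--Frobenius before using orthogonality for $\varphi_2$, whereas your orthogonality-plus-magnitude argument suffices on its own. The constant tracking you defer is exactly the paper's $\ep'$ bookkeeping, ending with $p(t_0) \geq (1-2\ep'-2\ep')^2 \geq 1-8\ep'$.
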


Theorem~\ref{thm:main} follows directly from Theorem~\ref{main} by a simple calculation.

\begin{proof}
Since $Q > 2m$, Gershgorin's Circle Theorem implies that there are exactly two eigenvalues in the interval $[Q-m, Q+m]$ and all other eigenvalues are in the $[-m,m]$ interval. Let $\lambda_1, \lambda_2$ denote the two eigenvalues in $[Q-m, Q+m]$ and let $\varphi_1, \varphi_2$ denote the corresponding unit length eigenvectors.

For simplicity, we denote the $u,v$ entries in the normalized eigenvectors in the following manner from now on:
\[\varphi_1(u,v)=\begin{bmatrix}
    \mu_1\\
    \nu_1
    \end{bmatrix}, 
    \varphi_2(u,v)=\begin{bmatrix}
    \mu_2\\
    \nu_2
    \end{bmatrix}.\]

Without loss of generality we may assume $\mu_1\geq\nu_1$. From Corollary~\ref{cor:phi_bound} and Lemma~\ref{lemma:ratio}, we know that 
\begin{align*}
        1 \geq \mu_1^2+\nu_1^2&\geq 1- \frac{22m^3}{\lambda_1^2} \geq 1-\frac{22m^3}{(Q-m)^2}\\
        1\leq\frac{|\mu_1|}{|\nu_1|}&\leq 1+ \frac{m^{c+1}}{(\lambda_1 - m)^{c-d+1}} \leq 1+\frac{m^{c+1}}{(Q-2m)^{c-d+1}}.
\end{align*}
 Thus, with $\ep' = \max\{\frac{22m^3}{(Q-m)^2},\frac{m^{c+1}}{(Q-2m)^{c-d+1}}\}$ we get
 \[
 1- \ep' \leq \mu_1^2+\nu_1^2 \leq \nu_1^2(1+(1+\ep')^2)
 \]
 so 
 \[ \nu_1^2 \geq \frac{1-\ep'}{2+2\ep'+\ep'^2} \geq \frac{1}{2} - \ep'.\] 
Hence
\[ \frac{1}{2}-\ep' \leq \nu_1^2 \leq \mu_1^2 \leq \frac{1}{2}+\ep' \]
The same is clearly true for $\mu_2$ and $\nu_2$. Due to Perron-Frobenius Theorem, $\mu_1$ and $\nu_1$ have the same sign. However,  $\mu_2,\nu_2$ must have opposite signs. To see this, note that $\varphi_1$ and $\varphi_2$ are orthogonal to each other, hence $0 = \mu_1 \mu_2 + \nu_1 \nu_2 + \sum_{x \neq u,v} \varphi_1(x)\varphi_2(x)$. 
This implies 
\begin{multline*}
 |\mu_1 \mu_2 + \nu_1 \nu_2| = \left| \sum_{x \neq u,v} \varphi_1(x)\varphi_2(x)\right| \leq \sqrt{\sum_{x \neq u,v} \varphi_1(x)^2}\sqrt{\sum_{x\neq u,v} \varphi_2(x)^2}\\ = \sqrt{1-\mu_1^2-\nu_1^2}\sqrt{1-\mu_2^2 - \nu_2^2} \leq 2\ep' 
\end{multline*}
But if $\mu_2$ and $\nu_2$ had the same sign then the left-hand side would be at least $1-2\ep'$ which implies $\ep' \geq 1/4$ and there is nothing to prove. Therefore, employing \eqref{ineq:p(t)} with $t_0 = \frac{\pi}{\lambda_1 - \lambda_2}$, the transfer strength is at least
\begin{align*} 
p(t_0) \geq (1-2\ep' - \sqrt{(2\ep')(2\ep')})^2 \geq 1-8\ep'
\end{align*}
\end{proof}


\section{Readout Time}\label{sec:readout}

We have seen in Theorem~\ref{main} that at $t_0 = \frac{\pi}{\lambda_1-\lambda_2}$ the transfer strength $p(t_0)$ is guaranteed to be at least $1-\ep$ as long as $Q$ satisfies the lower bound set forth in Theorem~\ref{thm:main}. Unfortunately, the scaling of $t_0$ is always exponential in the distance $d = d(u,v)$ between the nodes $u$ and $v$. Nonetheless, our main goal in this section is to give an upper bound on $t_0$, or equivalently, a lower bound on $\lambda_1 - \lambda_2$, the gap between the two largest eigenvalues of the Hamiltonian. This will be done by adapting our methods from~\cite{quantifying}. The first part of Theorem~\ref{thm:readout} follows from the subsequent statement.

\begin{theorem}\label{thm:t0bound}
Under the assumptions of Theorem~\ref{thm:main} the following bound holds for $t_0 = \frac{\pi}{\lambda_1-\lambda_2}$:
\[ t_0 \leq 2\pi (Q+m)^{d-1} \]
where $d = d(u,v)$, the distance of the source and target nodes in the graph.
\end{theorem}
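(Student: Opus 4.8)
The plan is to bound $t_0 = \pi/(\lambda_1 - \lambda_2)$ by bounding the gap $\lambda_1 - \lambda_2$ from below, which amounts to showing that $\lambda_1 - \lambda_2 \geq \frac{1}{2(Q+m)^{d-1}}$ or something of that order. The governing equation is the $2\times 2$ eigenvalue system \eqref{mtx:Z} from Theorem~\ref{thm:z}: for each of $\lambda_1,\lambda_2$ the pair $(\mu_i,\nu_i)$ is an eigenvector of the symmetric matrix $Z(\lambda) = \begin{bmatrix} Z_{uu}(\lambda) & Z_{uv}(\lambda) \\ Z_{uv}(\lambda) & Z_{vv}(\lambda) \end{bmatrix}$ with eigenvalue $1 - Q/\lambda$. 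So I would look at the scalar equation obtained by taking the relevant eigenvalue branch of $Z(\lambda)$. Since $\varphi_1$ is Perron--Frobenius-positive on the $\{u,v\}$ block and $\varphi_2$ is sign-alternating there, $\lambda_1$ corresponds to the eigenvalue $\zeta_+(\lambda) := \tfrac12(Z_{uu}+Z_{vv}) + \sqrt{\tfrac14(Z_{uu}-Z_{vv})^2 + Z_{uv}^2}$ of $Z(\lambda)$ and $\lambda_2$ to $\zeta_-(\lambda)$ (the analogous expression with a minus sign). Then $\lambda_1$ solves $1 - Q/\lambda = \zeta_+(\lambda)$ and $\lambda_2$ solves $1 - Q/\lambda = \zeta_-(\lambda)$.

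The key computation is then a mean-value / monotonicity argument. Define $g_{\pm}(\lambda) = 1 - Q/\lambda - \zeta_{\pm}(\lambda)$, so $g_{+}(\lambda_1) = 0 = g_{-}(\lambda_2)$. Since on $[Q-m,Q+m]$ we have $\lambda$ large (comparable to $Q \geq 2m$), all the $Z$-quantities are small: by Remark~\ref{rem:z_bound}, $|Z_{uu}|, |Z_{vv}| = O(m^2/\lambda^2)$ and $Z_{uv} \geq \lambda^{-d}$ with $Z_{uv} \leq (m/\lambda)^d \cdot \frac{1}{1-m/\lambda}$. The derivative $\frac{d}{d\lambda}(1 - Q/\lambda) = Q/\lambda^2$ is bounded below by roughly $1/(2(Q+m))$ on the interval, while $|\zeta_{\pm}'(\lambda)|$ is of lower order (a power-series term-by-term bound shows $\zeta_{\pm}' = O(m^2/\lambda^3)$, which is $\ll Q/\lambda^2$). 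Hence $g_{\pm}$ is strictly increasing with derivative $\approx Q/\lambda^2$. Now $\lambda_1 - \lambda_2$: writing $g_{+}(\lambda_2) = g_{+}(\lambda_2) - g_{+}(\lambda_1) = -(\lambda_1-\lambda_2) g_+'(\xi)$ for some $\xi$ in the interval, and $g_{+}(\lambda_2) = g_{+}(\lambda_2) - g_{-}(\lambda_2) = \zeta_{-}(\lambda_2) - \zeta_{+}(\lambda_2) = -2\sqrt{\tfrac14(Z_{uu}-Z_{vv})^2 + Z_{uv}^2}\big|_{\lambda_2} \leq -2 Z_{uv}(\lambda_2) \leq -2\lambda_2^{-d}$. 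Combining, $(\lambda_1 - \lambda_2) g_+'(\xi) \geq 2\lambda_2^{-d}$, so $\lambda_1 - \lambda_2 \geq 2\lambda_2^{-d}/g_+'(\xi) \geq 2\lambda_2^{-d} \cdot \frac{\lambda^2}{2Q} \cdot (1 - o(1))$ evaluated on the interval, which with $\lambda \geq Q - m$ and $\xi \leq Q+m$ gives $\lambda_1 - \lambda_2 \geq \frac{1}{(Q+m)^{d-1}}$ up to the constant, yielding $t_0 = \pi/(\lambda_1-\lambda_2) \leq \pi (Q+m)^{d-1}$, comfortably within the claimed $2\pi(Q+m)^{d-1}$ once one tracks the harmless multiplicative factors coming from $1 - m/\lambda$.

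The main obstacle I expect is controlling $\zeta_{\pm}'(\lambda)$ cleanly — the square root in $\zeta_{\pm}$ makes its derivative awkward near points where $Z_{uu} - Z_{vv}$ and $Z_{uv}$ compete, and one must make sure the bound $|\zeta_{\pm}'| \ll Q/\lambda^2$ holds uniformly on $[Q-m, Q+m]$ without the square root derivative blowing up. One way to sidestep this is to avoid $\zeta_\pm$ entirely: instead work directly with $\det(Z(\lambda) - (1 - Q/\lambda) I) = 0$, which is a single scalar analytic equation $h(\lambda) = 0$ whose two roots in $[Q-m,Q+m]$ are exactly $\lambda_1, \lambda_2$; then $\lambda_1 - \lambda_2$ can be estimated from below by comparing $h$ to the "unperturbed" quadratic $(1-Q/\lambda)^2 = Z_{uv}(\lambda)^2$ (i.e. dropping the small diagonal $Z_{uu}, Z_{vv}$ terms), which factors as $(1 - Q/\lambda - Z_{uv})(1 - Q/\lambda + Z_{uv}) = 0$ and whose root gap is controlled by $Z_{uv} \geq \lambda^{-d}$ divided by the slope $Q/\lambda^2$; a perturbation/Rouché-type estimate then transfers this to the true roots. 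Either route reduces everything to the two facts (i) $Z_{uv}(\lambda) \geq \lambda^{-d}$ and (ii) the diagonal and derivative corrections are negligible compared to $Q/\lambda^2$ on the Gershgorin interval, both immediate from Remark~\ref{rem:z_bound}.
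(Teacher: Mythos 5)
Your proposal is correct in substance, but it follows a genuinely different route from the paper. The paper never diagonalizes the $2\times 2$ matrix $Z(\lambda)$: it writes down only the first-row scalar equation of \eqref{mtx:Z} at $\lambda_1$ and at $\lambda_2$, divides by $\mu_i$, subtracts, and uses the sign facts $\nu_1/\mu_1>0>\nu_2/\mu_2$ together with the quantitative ratio bound $|\nu_i/\mu_i|\geq 1-\ep$ (both imported from the proof of Theorem~\ref{main}) to lower bound the right-hand side by $\lambda_1^{1-d}$; the factor $(\lambda_1-\lambda_2)$ is then extracted purely algebraically from the left-hand side via the telescoping identity $\lambda_2^{-k}-\lambda_1^{-k}=(\lambda_1-\lambda_2)\sum_j \lambda_1^{-j}\lambda_2^{-(k+1-j)}$ and $n_k(uu)\leq m^k$, with no calculus at all. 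You instead identify $\lambda_1,\lambda_2$ as roots of the two branch equations $1-Q/\lambda=\zeta_\pm(\lambda)$ (the branch assignment is legitimate: $Z_{uv}>0$ and the sign patterns of $(\mu_i,\nu_i)$ force it, using the same Perron--Frobenius/orthogonality facts the paper uses) and run a mean-value argument on $g_+$, with the branch separation $\zeta_+-\zeta_-\geq 2Z_{uv}\geq 2\lambda^{-d}$ playing the role of the paper's right-hand-side lower bound. A nice feature of your version is that it needs only the signs of $\nu_i/\mu_i$, not the estimate $|\nu_i/\mu_i|\geq 1-\ep$; the price is the uniform derivative bound on $\zeta_\pm$, which you correctly flag as the delicate point. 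It is not actually a problem: since $\sqrt{f}\geq Z_{uv}>0$ on $[Q-m,Q+m]$ and $\bigl|\tfrac{d}{d\lambda}\sqrt{f}\bigr|\leq \tfrac12|Z_{uu}'-Z_{vv}'|+|Z_{uv}'|$ (the weights $\tfrac{|Z_{uu}-Z_{vv}|}{2\sqrt f}$ and $\tfrac{Z_{uv}}{\sqrt f}$ are at most $1$), one gets $|\zeta_\pm'|=O(m^2/\lambda^3)\ll Q/\lambda^2$, and your constants then close with room to spare because the hypothesis of Theorem~\ref{thm:main} forces $Q\geq 16m$; your determinant/Rouch\'e alternative would also work. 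So both proofs rest on Theorem~\ref{thm:z}, the sign information from Theorem~\ref{main}, and $Z_{uv}(\lambda)\geq\lambda^{-d}$; the paper's is slightly more elementary (purely algebraic), while yours isolates the spectral-gap mechanism of the $2\times 2$ model more transparently.
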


\begin{proof}
Our starting point is the $Z$ matrix \eqref{mtx:Z}, and we apply it to both $\varphi_1$ and $\varphi_2$. Using an indexed version of the notation there we find
\begin{align*}
(\lambda_1 - Q)\mu_1 =& \lambda_1 Z_{uu}(\lambda_1)\mu_1 + \lambda_1 Z_{uv}(\lambda_1)\nu_1 \\
(\lambda_2 - Q)\mu_2 =& \lambda_2 Z_{uu}(\lambda_2)\mu_2 + \lambda_2 Z_{uv}(\lambda_2)\nu_2 
\end{align*}
\end{proof}
which gives, after further rearrangement and taking difference of the two equations:
\begin{equation}\label{eq:la1la2} (\lambda_1 - \lambda_2) - (\lambda_1 Z_{uu}(\lambda_1) - \lambda_2 Z_{uu}(\lambda_2)) = \lambda_1 \frac{\nu_1}{\mu_1} Z_{uv}(\lambda_1) - \lambda_2 \frac{\nu_2}{\mu_2} Z_{uv}(\lambda_2). \end{equation}
Note that, according to the proof of Theorem~\ref{thm:main},  $\nu_1/\mu_1 > 0 > \nu_2/\mu_2$ and hence all terms on the right-hand side are positive. Hence, we can lower bound it by the first respective terms in the series of $Z_{uv}$. Also, expanding the left-hand side using the definition of $Z_{uu}$ we get the following inequality
\[ (\lambda_1-\lambda_2)  + \sum_{k=1}^\infty n_{k+1}(uu) \left(\frac{1}{\lambda_2^k} - \frac{1}{\lambda_1^k}\right) \geq \left( \left| \frac{\nu_1}{\mu_1}\right|\frac{1}{\lambda_1^{d-1}} +\left| \frac{\nu_2}{\mu_2}\right|\frac{1}{\lambda_2^{d-1}} \right) \geq \frac{1}{\lambda_1^{d-1}}.\] 
Here $d$ denotes $d(u,v)$, and in the last step, we used that  both $|\nu_1/\mu_1| \geq 1-\ep$ and $|\nu_2/\mu_2| \geq 1-\ep$, so their sum is clearly at least 1. We can further rewrite the left-hand side to get
\[ (\lambda_1 - \lambda_2) \left( 1+ \sum_{k=1}^\infty n_{k+1}(uu) \sum_{j=1}^{k}\frac{1}{\lambda_1^j \lambda_2^{k+1-j}}\right) \geq \frac{1}{\lambda_1^{d-1}}.\]
Now using the usual $n_k(uu)\leq m^k$ bound we find
\begin{multline*}  \frac{1}{\lambda_1^{d-1}} \leq (\lambda_1 - \lambda_2) \left(1+ \sum_{k=1}^\infty \frac{k\cdot m^{k+1}}{\lambda_2^{k+1}}\right)=(\lambda_1 - \lambda_2) \left(1+ \frac{1}{1-m/\lambda_2} \sum_{k=2}^\infty \frac{m^{k}}{\lambda_2^{k}}\right)=\\ (\lambda_1 - \lambda_2) \left(1+ \frac{m^2}{(\lambda_2 - m)^2}\right) \leq (\lambda_1- \lambda_2)\left(1+ \frac{m^2}{(Q-2m)^2}\right) \leq  (\lambda_1- \lambda_2)\left(1+ \frac{1}{m^2}\right),
\end{multline*}
hence, finally,
\[ t_0 = \frac{\pi}{\lambda_1-\lambda_2} \leq \lambda_1^{d-1}\frac{m^2+1}{m^2} \leq 2 (Q+m)^{d-1} \]
\subsection{Readout interval}

In this section, we establish the second part of Theorem \ref{thm:readout} regarding the length of the interval centered around $t_0$. 

\begin{theorem}\label{thm:readout_interval}
Let $\ep >0$ be fixed and assume $Q$ satisfies the bounds set forth in Theorem~\ref{thm:main}. Then for all $t \in [\frac{\pi - \delta}{\lambda_1-\lambda_2}, \frac{\pi + \delta}{\lambda_1-\lambda_2} ] $, the following bound is true:
\[ p(t) \geq 1-\ep-2\delta \]
\end{theorem}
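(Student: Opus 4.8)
The plan is to compare $p(t)$ to its value at $t_0 = \pi/(\lambda_1 - \lambda_2)$, controlling the error by how far $t$ is from $t_0$. Recall from \eqref{eq:p(t)} that
\[ U(t)_{u,v} = \sum_{j=1}^n \varphi_j(u)\varphi_j(v) e^{it\lambda_j}. \]
Write $t = t_0 + s$ where $|s| \le \delta/(\lambda_1-\lambda_2)$. I would split the sum into the $j=1,2$ part and the tail $j \ge 3$. The tail is bounded in absolute value by $\sqrt{(1-\mu_1^2-\nu_1^2)(1-\mu_2^2-\nu_2^2)} \le 2\ep'$ exactly as in the proof of Theorem~\ref{main} (this holds for every $t$, so it contributes the fixed $\ep$-type loss), so the only thing that moves with $t$ is the two-term head $\mu_1\nu_1 e^{it\lambda_1} + \mu_2\nu_2 e^{it\lambda_2}$.

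For the head, factor out $e^{it\lambda_2}$: its modulus is $|\mu_1\nu_1 e^{it(\lambda_1-\lambda_2)} + \mu_2\nu_2|$. At $t=t_0$ the phase $t_0(\lambda_1-\lambda_2) = \pi$, and since $\mu_1\nu_1 > 0 > \mu_2\nu_2$ the two terms reinforce, giving $|\mu_1\nu_1| + |\mu_2\nu_2| \ge 1 - 2\ep'$ (each of $\mu_i^2+\nu_i^2 \ge 1-\ep'$ together with $|\mu_i/\nu_i|$ close to $1$ forces $|\mu_i\nu_i| \ge 1/2 - \ep'$, as in the proof of Theorem~\ref{main}). For general $t$, $t(\lambda_1-\lambda_2) = \pi + s(\lambda_1-\lambda_2) = \pi + \theta$ with $|\theta| \le \delta$, so
\[ \bigl|\mu_1\nu_1 e^{i(\pi+\theta)} + \mu_2\nu_2\bigr| = \bigl|{-}\mu_1\nu_1 e^{i\theta} + \mu_2\nu_2\bigr| \ge |\mu_1\nu_1| + |\mu_2\nu_2| - |\mu_1\nu_1|\,|e^{i\theta}-1| \ge (1-2\ep') - |e^{i\theta}-1|. \]
Using $|e^{i\theta}-1| \le |\theta| \le \delta$ and $|\mu_1\nu_1| \le 1/2$, the head is at least $1 - 2\ep' - \delta$. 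Combining with the tail bound, $|U(t)_{u,v}| \ge 1 - 4\ep' - \delta$, hence $p(t) \ge (1-4\ep'-\delta)^2 \ge 1 - 8\ep' - 2\delta \ge 1 - \ep - 2\delta$ after recalling $\ep = 8\ep'$ from Theorem~\ref{main} (and noting $\delta \le \pi$, so $t$ stays in a range where these linearizations are valid; if $\delta$ is larger the statement is vacuous since $p \ge 0$).

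The only mild subtlety is bookkeeping: making sure the constant in front of $\ep'$ lines up with the statement $p(t)\ge 1-\ep-2\delta$, i.e. that the extra slack from the moving phase is exactly $2\delta$ and not something worse. Using $|e^{i\theta}-1|\le|\theta|$ rather than the cruder $2\sin(|\theta|/2)$ bound, and the fact that $|\mu_1\nu_1|\le \tfrac12$, keeps the $\delta$-coefficient at $1$ inside $|U|$ and hence $2$ after squaring, which is what we want. No genuine obstacle is expected here — this is a perturbation of the already-completed argument for Theorem~\ref{main}.
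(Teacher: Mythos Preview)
Your argument is correct and essentially identical to the paper's: split into the $j=1,2$ head and the $j\ge 3$ tail, bound the tail by $2\ep'$ via Cauchy--Schwarz, and control the phase deviation of the head from $\pi$ by $\delta$ (the paper does this via $|a+e^{i\rho}b|\ge |a-b|-\delta\sqrt{|ab|}$ from a cosine estimate, you via the equivalent triangle-inequality form with $|e^{i\theta}-1|\le|\theta|$), arriving at the same $|U(t)_{u,v}|\ge 1-4\ep'-\delta$. One notational slip: the tail bound is $\sqrt{(1-\mu_1^2-\mu_2^2)(1-\nu_1^2-\nu_2^2)}$, not $\sqrt{(1-\mu_1^2-\nu_1^2)(1-\mu_2^2-\nu_2^2)}$, but your claimed $\le 2\ep'$ still holds since $\mu_i^2,\nu_i^2\ge \tfrac12-\ep'$.
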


\begin{proof}
A simple computation shows that the lower bound on $Q$ implies that $\ep' = \max\{\frac{22m^3}{(Q-m)^2},\frac{m^{c+1}}{(Q-2m)^{c-d+1}}\} \leq \ep/8$. (This is the same computation that shows Theorem~\ref{main} implies Theorem~\ref{thm:main}.) 
A straightforward generalization of \eqref{ineq:p(t)} shows that
\[ |U(t)_{u,v}| \geq |e^{it(\lambda_1-\lambda_2)} \varphi_1(u)\varphi_1(v)-\varphi_2(u)\varphi_2(v)|-\sqrt{(1-\varphi_1(u)^2-\varphi_2(u)^2)(1-\varphi_1(v)^2-\varphi_2(v)^2)}.\]
Notice that for real numbers $a > 0 > b$ and angle $ \pi - \delta < \rho < \pi+\delta$ we have 
\begin{multline*} |a + e^{i\rho}b|^2 = a^2 + b^2 +ab(e^{i\rho}+e^{-i\rho}) = (a-b)^2 + ab(2+e^{i\rho}+e^{-i\rho}) \\ \geq (a-b)^2 - 2ab(1-\cos \delta)\geq (a-b)^2 -\delta^2 ab\end{multline*} and thus
\[ |a + e^{i\rho}b| \geq |a-b| - \delta \sqrt{ab}.\]
Combining this with estimates from the proof of Theorem~\ref{main}, and noticing that $\pi - \delta \leq t(\lambda_1-\lambda_2) \leq \pi+\delta$, we get 
\begin{multline*} |U(t)_{u,v}| \geq  | \varphi_1(u)\varphi_1(v)+ e^{it(\lambda_1-\lambda_2)}\varphi_2(u)\varphi_2(v)| - 2\ep'  \\ 
\geq |\varphi_1(u)\varphi_1(v) - \varphi_2(u)\varphi_2(v)| - \delta \sqrt{|\varphi_1(u)\varphi_1(v)\varphi_2(u)\varphi_2(v)|} -2\ep' 
\geq 1-4\ep' - \delta
\end{multline*}
and thus
\[ p(t) = \left|U(t)_{u,v}\right|^2 \geq (1-4\ep' -\delta)^2 \geq 1 - 8\ep'- 2\delta = 1-\ep - 2\delta. \]
\end{proof}


\bibliographystyle{plain}
\bibliography{ref}

\end{document}